\documentclass[a4paper,12pt]{article}
\usepackage{amssymb,amsfonts,amsmath,mathtext,cite,enumerate,float}
\usepackage{amssymb,amsthm,latexsym,euscript,amscd, authblk}
\usepackage{braket}

\usepackage[english]{babel}
\usepackage[utf8]{inputenc}

\newtheorem{lemma}{Lemma}
\newtheorem{theorem}{Theorem}

\title {On Definition of Quantum Tomography via the Sobolev Embedding Theorem}

\author[1]{Grigori Amosov\thanks {gramos@mi-ras.ru}}

\author[2]{Yakov Korennoy\thanks {abc772211@mail.ru}}

\affil[1]{Steklov Mathematical Institute of Russian Academy of Sciences, 8 Gubkina St., Moscow, 119991 Russia}

\affil[2]{Lebedev Physical Institute of Russian Academy of Sciences,
Leninskii prospect 53, Moscow 119991, Russia}

\begin{document}

\maketitle

\begin{abstract}
We obtain sufficient conditions on kernels of quantum states under which Wigner functions, optical quantum tomograms and linking their formulas are correctly defined. Our approach is based upon the Sobolev embedding theorem. The transition probability formula and the fractional Fourier transform are discussed in this framework.
\end{abstract}

{\it 2010 AMS Mathematical Subject Classification:} 81P16, 46E35

{\bf Keywords:} quantum tomography, Wigner function, optical tomogram,
Sobolev embedding theorem, partial Fourier transform, Radon transform

\section{Introduction}

In \cite{MankoPhysLettA96}  it was shown that the states of
$n$-dimensional quantum systems can be completely described by the
real and positive probability distribution functions of physical
observable $\overline{\hat x}(\overline\alpha)=\overline{\hat
q\cos\alpha}+\overline{\hat p\sin\alpha}$, where $\overline{\hat
q\cos\alpha}$ and $\overline{\hat p\sin\alpha}$ are $n$-dimensional
vectors with the components $\hat q_j\cos\alpha_j$ and $\hat
q_j\sin\alpha_j$. Here $\overline{\hat{q}}$ and $\overline{\hat{p}}$
are position and momentum operators, $\overline\alpha$ are angular
parameters. Such a representation of quantum states  is known as
tomographic. In a variety of subsequent articles the properties of
tomographic representation were investigated in detail (see, e.g.
Review \cite{IbortPhysScr}).

The ordinary definition of optical quantum tomogram reads \cite{MankoPhysLettA96}
\begin{equation}\label{tom}
\omega_{\hat\rho}(\overline x,\overline\alpha)=
\mathrm{Tr}\left\{\hat\rho\delta(\overline x\hat I
-\overline {\hat q\cos \alpha}-\overline{\hat p\sin\alpha})\right\},
\end{equation}
where $\hat\rho$ is a density operator of the state.

Let us make sure that a trace in the righthand side of (\ref {tom}) is correctly defined
for all $\hat\rho>0$ with $\mathrm{Tr}\{\hat\rho\}=1$ and the kernels $\rho(q,q')\in L_2({\mathbb R}^{2n})$.
Following to \cite {Hol} let us define the characteristic function of a state $\hat \rho$ by the formula
\begin{equation}\label{chF}
f_{\hat \rho}(\overline x,\overline y)=\mathrm{Tr}\left\{\hat \rho e^{i(\overline x\cdot\overline {\hat q}
+\overline y\cdot\overline {\hat p})}\right\}.
\end{equation}
It is known \cite{Hol} that $f_{\hat \rho}\in L_2({\mathbb R}^{2n})\cap C({\mathbb R}^{2n})$. Hence we can define a function
\begin{equation*}
F_{\hat \rho}(\overline t,\overline \alpha )=f_{\hat \rho}(\overline{t\cos\alpha},\overline{t\sin\alpha}).
\end{equation*}
It follows from the inclusion $f_{\hat \rho}\in L_2({\mathbb R}^{2n})$ that functions
$$
f_{\overline s,\overline \alpha}(\overline t)=f_{\hat \rho }(\overline s+\overline {t\cos \alpha } ,\overline s+\overline {t\sin\alpha })\in L_2({\mathbb R}^n)
$$
for almost all $\overline s\in {\mathbb R}^n$. Hence $F_{\hat \rho }(\cdot ,\overline \alpha )\in L_2({\mathbb R}^n)$ due to $f_{\hat \rho}$ is continuous.
Formula (\ref {tom}) means that
$\omega _{\hat \rho }(\cdot,\overline \alpha)$ is the Fourier transform of $F_{\hat \rho }(\cdot,\overline \alpha )$
and we see that it is correct as the Fourier transform of $L_2$-function.

It should be noted that
given a characteristic function $f_{\hat \rho}$ it is possible to regenerate $\hat \rho $ in weak sense. Then, for a kernel of $\hat \rho $
in the coordinate representation we get \cite{Hol}
\begin{equation}\label{weak}
\rho (\overline q,\overline q')=\frac {1}{(2\pi )^n}\int \limits _{{\mathbb R}^{n}}e^{-\frac {i}{2}(\overline q+\overline q')\overline y}f_{\hat \rho }(\overline q-\overline q',\overline y)d\overline y.
\end{equation}
Thus, (\ref {weak}) is the partial Fourier transform of $f_{\rho }$.
Hence we can find
kernels $\rho _{\overline \alpha }(\cdot ,\cdot )\in L^2({\mathbb R}^{2n})$ of the operator $\hat \rho$ in integral
representations associated with all the observables $\overline {\hat x}(\overline \alpha )$. It results in a tomogram can
be correctly defined by the formula
\begin{equation}\label{Tom}
\omega _{\hat \rho }(\overline x,\overline \alpha )=\rho _{\overline \alpha }(\overline x,\overline x)
\end{equation}
for any state $\hat \rho $, and $\omega _{\hat \rho}(\cdot ,\overline \alpha )\in L_1({\mathbb R}^n)$. Thus,
formulae (\ref {tom}) and (\ref {Tom}) give the same result for all states $\hat \rho $. Note that to take a trace in (\ref {Tom}) we take into account that $\hat \rho >0$. For an arbitrary $\hat \rho$ of the trace class (\ref {Tom}) is not valid. In our work, we will waive the requirement of positivity for $\hat \rho$.

The situation becomes more complex for the Wigner function
\begin{equation}\label{Wigner}
\mathcal {W}_{\hat \rho}(\overline q,\overline p)=\frac {1}{(2\pi )^n}\int \limits _{{\mathbb R}^{2n}}e^{i\overline q\cdot\overline x
+i\overline p\cdot\overline y}f_{\hat \rho }(\overline x,\overline y)d^n\overline xd^n\overline y.
\end{equation}
The claim $f_{\hat \rho }(\cdot , \cdot)\in L_2({\mathbb R}^{2n})$ results in $\mathcal {W}_{\hat \rho}(\cdot ,\cdot )\in L_2({\mathbb R}^{2n})$ but taking the Fourier transform we lose smoothness. Thus, the restriction of $\mathcal {W}_{\hat \rho }$ to a fixed hyperplane doesn't exist in general. This prevents the use of the Radon transform for $\mathcal {W}_{\hat \rho }$.
To avoid these difficulties, articles
concerning quantum tomography often deal only with a variety of bound states of quantum systems. In the case, wave functions
as well as density matrixes of such states belong to the Schwartz spaces, i.e.
$\psi(\cdot )\in \mathcal{S}(\mathbb{R}^n)$ and $\rho(\cdot,\cdot)\in \mathcal{S}(\mathbb{R}^{2n})$.

Up to the present time the question is opened about what class of functions do we need for the integral formulae connecting quantum tomograms and Wigner functions to be correct. In the present paper we will partially fill this gap using the famous Sobolev Embedding Theorem.

\section {Preliminaries}

Following \cite{RS} denote $\mathcal {S}({\mathbb R}^n)$ the Schwartz space consisting of
infinitely differentiable and fast descending at infinity functions $\psi (\overline x)$ of $n$ variables
$\overline x=(x_1,\dots ,x_n)$. All functions $\psi \in \mathcal{S}({\mathbb R}^n)$ are known to be summable
with respect to any choice of variables $x_{j_1},\dots ,x_{j_k}$. Moreover,
$$
\int \limits _{{\mathbb R}^k}\psi (\overline x)dx_{j_1}\dots dx_{j_k}\in \mathcal{S}({\mathbb R}^{n-k}).
$$
The Fourier transform is correctly defined for $\psi \in \mathcal{S}({\mathbb R}^n)$ by the formula
\begin{equation}\label{F}
{\mathcal F}[\psi ](\overline x)=\frac {1}{(2\pi )^n}\int \limits _{\mathbb {R}^n}e^{-i\overline x\cdot\overline y}
\psi (\overline y)d^n\overline y.
\end{equation}
It is known that ${\mathcal F}\left[\mathcal{S}({\mathbb R}^n)\right]=\mathcal{S}({\mathbb R}^n)$.
Due to the Plancherel equality
$$
(2\pi )^n\int \limits _{\mathbb {R}^n}|{\mathcal F}[\psi](\overline x)|^2d^n\overline x=\int \limits _{\mathbb {R}^n}|
\psi (\overline y)|^2d^n\overline y
$$
the Fourier transform (\ref {F}) can be extended to $L_2({\mathbb R}^n)$.
Fix the set of indices $J=\{j_1,\dots ,j_k\}$, $0<k<n$, then  also the partial Fourier transform ${\mathcal F}_J$
can be correctly defined
on $\mathcal{S}({\mathbb R}^n)$ as follows
\begin{equation}\label{PF}
{\mathcal F}_J[\psi](\overline x)=\frac {1}{(2\pi)^k}\int \limits _{\mathbb {R}^k}e^{-i\sum \limits _{j\in J}x_jy_j}
\psi (\overline y)\prod _{j\in J} dy_{j}
\Bigg\vert_{\{y_l=x_l,\,\,l\in\{1,...,n\}\backslash J\}}\ \ .
\end{equation}
Analogously to ${\mathcal F}$ taking into account the Plancherel equality we get that (\ref {PF}) can be extended
to the space $L_2({\mathbb R}^k)$.

Given $\nu >0$ denote $W_2^{\nu }({\mathbb R}^n)$ the Sobolev space consisting of functions $\psi \in L_2({\mathbb R}^n)$ such that
the function $\tilde \psi (\overline x)=|\overline x|^{\nu}{\mathcal F}[\psi](\overline x)$ is square summable.
If $\nu =k$ is an integer number, $W_2^{k}({\mathbb R}^n)$ consists of functions
possessing $k$ weak derivatives from $L_2({\mathbb R}^n)$ in any variables. Let us consider the spaces
\begin{equation}\label{space}
{{\mathcal V}({\mathbb R}^{2n})}=W_2^{n+1}({\mathbb R}^{2n})\cap {\mathcal F}\left[W_2^{n+1}({\mathbb R}^{2n})\right],
\end{equation}
\begin{equation*}
{\mathbb V}({\mathbb R}^{n})=W_2^{n+1}({\mathbb R}^{n})\cap {\mathcal F}\left[W_2^{n+1}({\mathbb R}^{n})\right].
\end{equation*}

\begin {lemma}\label{1} The spaces ${{\mathcal V}({\mathbb R}^{2n})}$ and ${\mathbb V}({\mathbb R}^n)$ are invariant
with respect to the Fourier
transform and the partial Fourier transform as well.
\end{lemma}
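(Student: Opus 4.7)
The plan is to first repackage $\mathcal{V}(\mathbb{R}^{2n})$ (and analogously $\mathbb{V}(\mathbb{R}^{n})$) in a symmetric form. Note that $\psi = \mathcal{F}[\chi]$ for some $\chi \in W_{2}^{n+1}(\mathbb{R}^{2n})$ is equivalent, via $\chi = \mathcal{F}^{-1}[\psi]$ and Plancherel, to requiring $\psi \in L_{2}$ together with $|\overline x|^{n+1}\psi \in L_{2}$. Consequently
\[
\mathcal{V}(\mathbb{R}^{2n}) = \{\psi \in L_{2}(\mathbb{R}^{2n}) : |\overline x|^{n+1}\psi \in L_{2}\ \text{and}\ |\overline x|^{n+1}\mathcal{F}[\psi] \in L_{2}\},
\]
and likewise for $\mathbb{V}(\mathbb{R}^{n})$. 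In this form Fourier invariance is immediate: applying $\mathcal{F}$ simply swaps the two conditions, using $\mathcal{F}^{2}\psi(\overline x) = (2\pi)^{-2n}\psi(-\overline x)$ together with the reflection invariance of both $L_{2}$-norms.

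For the partial Fourier transform $\mathcal{F}_{J}$ I would first record two elementary identities (directly from the definitions (\ref{F}) and (\ref{PF})): the factorisation $\mathcal{F} = \mathcal{F}_{J}\mathcal{F}_{J^{c}}$, equivalently $\mathcal{F}_{J} = \mathcal{F}_{J^{c}}^{-1}\mathcal{F}$, and the commutation $\mathcal{F}\mathcal{F}_{J} = \mathcal{F}_{J}\mathcal{F}$. Membership $\mathcal{F}_{J}[\psi] \in L_{2}$ is then immediate from partial Plancherel. To control $|\overline x|^{n+1}\mathcal{F}_{J}[\psi]$ in $L_{2}$, I would split $|\overline x|^{2(n+1)} \le C(|\overline x_{J}|^{2(n+1)} + |\overline x_{J^{c}}|^{2(n+1)})$ and treat each weight with whichever factorisation is convenient. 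The factor $|\overline x_{J^{c}}|^{n+1}$ commutes through $\mathcal{F}_{J}$ since it depends only on the \emph{external} variables, producing $\mathcal{F}_{J}[|\overline x_{J^{c}}|^{n+1}\psi]$, whose $L_{2}$-norm is bounded by $C\||\overline x|^{n+1}\psi\|_{L_{2}}$ via partial Plancherel and $|\overline x_{J^{c}}|\le|\overline x|$. For the other piece, I would pass to $\mathcal{F}_{J} = \mathcal{F}_{J^{c}}^{-1}\mathcal{F}$: now $|\overline x_{J}|^{n+1}$ commutes through $\mathcal{F}_{J^{c}}^{-1}$ for the same reason, giving $\mathcal{F}_{J^{c}}^{-1}[|\overline x_{J}|^{n+1}\mathcal{F}[\psi]]$, whose $L_{2}$-norm is controlled by $C\||\overline\xi|^{n+1}\mathcal{F}[\psi]\|_{L_{2}}$.

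The Fourier-side condition $|\overline x|^{n+1}\mathcal{F}[\mathcal{F}_{J}[\psi]] \in L_{2}$ is then obtained by writing $\mathcal{F}[\mathcal{F}_{J}[\psi]] = \mathcal{F}_{J}[\mathcal{F}[\psi]]$ and applying the previous paragraph with $\psi$ replaced by $\mathcal{F}[\psi]$, which by the first paragraph also lies in $\mathcal{V}(\mathbb{R}^{2n})$. The argument for $\mathbb{V}(\mathbb{R}^{n})$ is verbatim the same, with $J \subset \{1,\dots,n\}$.

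The main obstacle is purely bookkeeping: checking the identities $\mathcal{F}_{J} = \mathcal{F}_{J^{c}}^{-1}\mathcal{F}$ and $\mathcal{F}\mathcal{F}_{J} = \mathcal{F}_{J}\mathcal{F}$ under the specific normalisations in (\ref{F}) and (\ref{PF}), and tracking the $(2\pi)^{k}$ constants in partial Plancherel. Once these and the symmetric reformulation of $\mathcal{V}$ are in place, the verification splits cleanly into the two decay conditions and proceeds in a few lines.
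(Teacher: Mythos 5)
Your argument is correct and rests on the same mechanism as the paper's one-line proof: the Fourier transform exchanges the decay condition $|\overline x|^{n+1}\psi\in L_2$ with the smoothness condition $|\overline x|^{n+1}{\mathcal F}[\psi]\in L_2$ (equivalently, exchanges multiplication by $x_j$ with the differentiation $-i\partial/\partial x_j$). Your handling of the partial Fourier transform --- the splitting of the weight into the $|\overline x_J|^{n+1}$ and $|\overline x_{J^c}|^{n+1}$ pieces together with the factorisation ${\mathcal F}={\mathcal F}_J{\mathcal F}_{J^c}$ --- merely supplies the bookkeeping that the paper's proof leaves implicit.
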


\begin{proof}

The operator of multiplication by the variable $x_j$ is mapped to the differentiation $-i\frac {\partial }{\partial x_j}$
and vice versa. The result follows.

\end{proof}

Given a function $f\in L_2({\mathbb R}^n)$ one can try to define a function $F\in L^2({\mathbb R}^m),\ m<n,$
which is a restriction of $F=f|_S$ to some hyperplane $S$ with codimension $n-m$. Such the restriction $F$ known as a trace of
$f$ is not defined in general.
Our consideration is based upon the following famous statement \cite {Sobolev, Nikolski}.

{\bf The Sobolev Embedding Theorem.} {\it Suppose that $f\in W_2^{\nu}({\mathbb R}^N)$, then
$$
(\mathrm{i})\ \ \ f\in C^{\nu-[\frac {N}{2}]-1}({\mathbb R}^N)
$$
(\cite {Sobolev}).
Moreover, the trace $F=f|_{S}$ to any hyperplane $S$ of codimension $N-m$
exists and
$$
(\mathrm{ii})\ \ \ F\in W_2^{\nu-\frac {N-m}{2}}({\mathbb R}^m)
$$
(Theorem 2 in \cite {Nikolski}).
}

The lemma below shows why the space ${\mathcal V}({\mathbb R}^n)$ can be useful in integral transformations.

\begin {lemma}\label {2} Given a function $f(x_1,\dots x_n,y_1,\dots y_2)$ belonging to ${\mathcal V}({\mathbb R}^{2n})$ and a vector
$\overline \alpha \in [0,2\pi]^n$
the trace
$$
F_{\overline\alpha}(t_{1},\dots ,t_{n})=f(t_1\cos\alpha_1,\dots,t_n\cos\alpha_n,t_1\sin\alpha_1,\dots,t_n\sin\alpha_n)
$$
of $f$ to the hyperplane $x_j\sin\alpha_j-y_j\cos\alpha_j=0,\ 1\le j\le n,$ is correctly defined and
$F_{\overline \alpha}\in C({\mathbb R}^n)\cap L_1({\mathbb R}^n)$.
\end{lemma}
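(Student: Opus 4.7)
The plan is to rotate in each coordinate pair so that the hyperplane becomes axis-aligned, obtain continuity directly from Sobolev embedding~(i), and then bound $\|F_{\overline\alpha}\|_{L_1}$ by applying Cauchy--Schwarz to the partial Fourier transform of $f$ in the transverse directions, so that both halves of the definition of $\mathcal V(\mathbb R^{2n})$ come into play simultaneously. For each $j$ I would introduce $u_j=x_j\cos\alpha_j+y_j\sin\alpha_j$, $v_j=-x_j\sin\alpha_j+y_j\cos\alpha_j$; this is an orthogonal change of coordinates on $\mathbb R^{2n}$. Since orthogonal maps commute with $\mathcal F$ and preserve both $L_2$-norms and the Euclidean weight $|\cdot|^{n+1}$, the function $\tilde f(\overline u,\overline v)$ obtained from $f$ by this change of variables again lies in $\mathcal V(\mathbb R^{2n})$. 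A short calculation shows that the hyperplane $x_j\sin\alpha_j-y_j\cos\alpha_j=0$ becomes $\{\overline v=\overline 0\}$, parameterized by $\overline u=\overline t$, so $F_{\overline\alpha}(\overline t)=\tilde f(\overline t,\overline 0)$. Applying Sobolev~(i) with $\nu=n+1$, $N=2n$ gives $\nu-[N/2]-1=0$, hence $\tilde f\in C(\mathbb R^{2n})$ and $F_{\overline\alpha}\in C(\mathbb R^n)$.

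For the $L_1$-bound I would pass to the partial Fourier transform $\tilde F(\overline u,\overline\eta):=\mathcal F_{\overline v}[\tilde f](\overline u,\overline\eta)$. Fourier inversion in $\overline v$ evaluated at $\overline v=\overline 0$ gives, up to a fixed constant, $\tilde f(\overline u,\overline 0)=\int_{\mathbb R^n}\tilde F(\overline u,\overline\eta)\,d\overline\eta$, so by Fubini $\|F_{\overline\alpha}\|_{L_1(\mathbb R^n)}\le C\|\tilde F\|_{L_1(\mathbb R^{2n})}$ and the task reduces to showing $\tilde F\in L_1(\mathbb R^{2n})$. Setting $\overline w=(\overline u,\overline\eta)\in\mathbb R^{2n}$, Cauchy--Schwarz gives
\[
\|\tilde F\|_{L_1}\le\|(1+|\overline w|^2)^{-(n+1)/2}\|_{L_2(\mathbb R^{2n})}\cdot\|(1+|\overline w|^2)^{(n+1)/2}\tilde F\|_{L_2(\mathbb R^{2n})},
\]
the first factor being finite because $2(n+1)>2n$. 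Using the elementary inequality $(1+|\overline w|^2)^{(n+1)/2}\le C(1+|\overline u|^{n+1}+|\overline\eta|^{n+1})$, it then suffices to show that each of $\tilde F$, $|\overline\eta|^{n+1}\tilde F$, and $|\overline u|^{n+1}\tilde F$ belongs to $L_2(\mathbb R^{2n})$.

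All three are Plancherel computations, and this is where the two defining conditions of $\mathcal V$ enter. The bound $\tilde F\in L_2$ is Plancherel in $\overline v$ combined with $\tilde f\in L_2$. For $|\overline\eta|^{n+1}\tilde F$, Plancherel identifies its $L_2$-norm (up to constants) with a combination of $L_2$-norms of $\overline v$-derivatives of $\tilde f$ of total order $n+1$, all finite because $\tilde f\in W_2^{n+1}$. For $|\overline u|^{n+1}\tilde F$, since $\mathcal F_{\overline v}$ commutes with multiplication by functions of $\overline u$, Plancherel gives $\||\overline u|^{n+1}\tilde F\|_{L_2}=C\||\overline u|^{n+1}\tilde f\|_{L_2}\le C\||(\overline u,\overline v)|^{n+1}\tilde f\|_{L_2}<\infty$, the finiteness being the second defining property of $\mathcal V$. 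I expect the $L_1$-estimate to be the main obstacle: the Sobolev trace theorem~(ii) by itself yields only $F_{\overline\alpha}\in W_2^{(n+2)/2}(\mathbb R^n)$, which embeds into $L_\infty$ but not $L_1$, and a direct weighted-trace approach (applying Sobolev~(i) to $(1+|\cdot|^2)^{s/2}f$) fails because $f\in\mathcal V$ provides polynomial decay of $f$ itself but no control on weighted derivatives. The partial Fourier trick is precisely what decouples the two halves of the $\mathcal V$-condition and makes both usable simultaneously, with $W_2^{n+1}$-regularity handling $\overline\eta$-weights and $|\cdot|^{n+1}$-decay handling $\overline u$-weights.
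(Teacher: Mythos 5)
Your proof is correct, and while it rests on the same two pillars as the paper's --- the Sobolev embedding for continuity and a weighted Cauchy--Schwarz inequality exploiting the second half of the definition of ${\mathcal V}({\mathbb R}^{2n})$ for integrability --- the execution of the $L_1$ step is genuinely different. The paper applies the trace theorem (ii) to get $F_{\overline\alpha}\in W_2^{\frac n2+1}({\mathbb R}^n)$, then (i) on ${\mathbb R}^n$ for continuity, and for integrability it restricts the weighted function $|\overline x|^{n+1}f(\overline x)$ to the hyperplane, asserts $|\overline t|^{n+1}F_{\overline\alpha}(\overline t)\in L_2({\mathbb R}^n)$, and concludes by the Cauchy--Schwarz estimate (\ref{sch}) with the weight $\prod_j(t_j^2+1)$ on ${\mathbb R}^n$. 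You instead get continuity directly from (i) on ${\mathbb R}^{2n}$ (since $\nu-[N/2]-1=0$), and for the $L_1$ bound you never restrict a weighted function at all: you pass to the partial Fourier transform in the transverse variables, dominate $\|F_{\overline\alpha}\|_{L_1({\mathbb R}^n)}$ by $\|{\mathcal F}_{\overline v}[\tilde f]\|_{L_1({\mathbb R}^{2n})}$ via Fourier inversion at $\overline v=\overline 0$, and run Cauchy--Schwarz on the ambient ${\mathbb R}^{2n}$, where the $W_2^{n+1}$-regularity controls the $\overline\eta$-weight and the ${\mathcal F}[W_2^{n+1}]$-decay controls the $\overline u$-weight. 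Your route is longer but is tighter precisely at the one point where the paper is terse: the step from $|\overline x|^{n+1}f\in L_2({\mathbb R}^{2n})$ to $|\overline t|^{n+1}F_{\overline\alpha}\in L_2({\mathbb R}^n)$ is itself a trace statement for a weighted function and needs more justification than the paper supplies (restriction to a measure-zero hyperplane is not controlled by an $L_2$ norm alone); your partial-Fourier detour avoids having to make that assertion. The paper's argument, when the weighted trace step is granted, is shorter and stays entirely on ${\mathbb R}^n$.
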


\begin{proof}

Substituting $N=2n$, $\nu =n+1$ and $m=n$ to Sobolev Embedding Theorem we obtain that $F_{\overline\alpha}$ is correctly defined and
the inclusion $F_{\overline\alpha}\in W_2^{\frac {n}{2}+1}({\mathbb R}^n)$ holds due to (ii). 
Then, (i) results in $F_{\overline\alpha}\in C({\mathbb R}^n)$.
Since $f\in {\mathcal V}({\mathbb R}^{2n})$ we get ${\mathcal F}[f]\in W_2^{n+1}({\mathbb R}^{2n})$.
It gives rise $|\overline x|^{n+1}f(\overline x)$ lies in $L_2({\mathbb R}^{2n})$. 
Hence, $|\overline t|^{n+1}F_{\overline\alpha}(\overline t)$
belongs to $L_2({\mathbb R}^n)$. Applying
the Schwartz inequality
\begin{multline}\label{sch}
\left |\int \limits _{{\mathbb R}^n}F_{\overline\alpha}(t_{1},\dots ,t_n )dt_{1}\dots
dt_{n}\right |^2  \\
\le\left  (\int \limits _{{\mathbb R}^n}\prod
_{j=1}^n\frac {1}{t_j^2+1}dt_{j}\right ) 
\left (\int \limits
_{{\mathbb R}^n}|F_{\overline\alpha}(t_1,\dots ,t_{n})|^2\prod
_{j=1}^n(t_{j}^2+1)dt_{j}\right )
\end{multline}
we obtain $F_{\overline\alpha}\in L_1({\mathbb R}^n)$.

\end{proof}

Take a function $\rho (\cdot ,\cdot)\in L_2({\mathbb R}^{2n})$ and consider the integral operator $\hat \rho$ defined by the formula
\begin{equation}\label{density}
\hat\rho[\psi](\overline x)=\int \limits _{{\mathbb R}^n}\rho (\overline x,\overline y)\psi (\overline y)d^n\overline y,
\ \psi \in L^2({\mathbb R}^n).
\end{equation}
If $\hat \rho $ belongs to the convex set $\mathfrak {S}$ consisting of hermitian and positive unit trace operators in
$L_2({\mathbb R}^n)$, it is called {\it a quantum state}.
The important subclass of $\mathfrak {S}$ is
pure quantum states $\hat \rho $ with the kernels $\rho (\overline x,\overline y)=\xi (\overline x)\xi ^*(\overline y)$,
$\xi \in L_2({\mathbb R}^n), \|\xi \|=1$.
The characteristic function $f_{\hat \rho}\equiv f_{\rho}$ of $\hat \rho\in \mathfrak {S}$ is correctly defined by the formula (\ref{chF}).
For a pure quantum state $\hat \rho =|\xi \rangle\langle\xi |$ we obtain
\begin{equation*}
f_{\xi }(\overline x,\overline y)=\langle\xi |e^{i\overline x\cdot\overline{\hat q}+i\overline y\cdot\overline{\hat p}}|\xi \rangle.
\end{equation*}

Using the Baker formula $e^{ix\hat q_j+iy\hat p_j}=e^{ix\hat q_j}e^{iy\hat p_j}e^{\frac {ixy}{2}}$ we can rewrite
(\ref{chF}) as follows
\begin{equation}\label{Baker}
f_{\rho}(\overline x,\overline y)=\int \limits _{{\mathbb R}^n}e^{i\overline x\cdot\overline t}\rho \left (\overline t
+\frac {\overline y}{2},\overline t-\frac {\overline y}{2}
\right )d^n\overline t.
\end{equation}
It follows from (\ref {Baker}) that the map $\hat \rho \to f_{\rho}$ can be extended to the Hilbert-Schmidt operators
$\hat \rho$ having the kernels $\rho (\cdot ,\cdot )\in L_2({\mathbb R}^{2n})$. In the case,
$f_{\rho }(\cdot ,\cdot)\in L_2({\mathbb R}^{2n})$ \cite {Hol}.

\begin {lemma}\label{3} Formula (\ref{Baker}) defines a linear map on the space ${\mathcal V}({\mathbb R}^{2n})$. The invers
transformation is given by the formula
\begin{equation}\label{inverseBaker}
\rho(\overline q,\overline q')=\frac{1}{(2\pi)^n}\int \limits _{{\mathbb R}^n}e^{-i\overline x\cdot\overline t}
f_{\rho}(\overline x,\overline y)d^n\overline x
\Big|_{\overline t=(\overline q+\overline q')/2,~\overline y=\overline q-\overline q'}~~.
\end{equation}
\end{lemma}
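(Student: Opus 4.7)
The plan is to realize the map $\rho\mapsto f_\rho$ of (\ref{Baker}) as the composition of an invertible linear change of variables in the kernel and a partial Fourier transform in half of the variables, and then to invoke Lemma \ref{1}.

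The first step is to introduce the operator $T$ defined by $(T\rho)(\overline t,\overline y)=\rho(\overline t+\overline y/2,\overline t-\overline y/2)$. Per pair of coordinates $(q_j,q'_j)\leftrightarrow(t_j,y_j)$ this substitution has Jacobian of modulus $1$, so $T$ is an $L_2$-isometry. Because $T$ is induced by an invertible linear map on $\mathbb R^{2n}$, its conjugate by the Fourier transform is another such linear substitution on the Fourier side, and the weight $|\cdot|^{n+1}$ featuring in the definition of $W_2^{n+1}$ pulls back to an equivalent weight. Therefore $T$ preserves both $W_2^{n+1}(\mathbb R^{2n})$ and $\mathcal F[W_2^{n+1}(\mathbb R^{2n})]$, i.e.\ $T$ is a linear bijection of $\mathcal V(\mathbb R^{2n})$ onto itself.

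The second step is to observe that with this notation (\ref{Baker}) reads $f_\rho(\overline x,\overline y)=\int_{\mathbb R^n}e^{i\overline x\cdot\overline t}(T\rho)(\overline t,\overline y)d^n\overline t=(2\pi)^n\mathcal F_J[T\rho](-\overline x,\overline y)$, where $J=\{1,\dots,n\}\subset\{1,\dots,2n\}$ indexes the $\overline t$-variables and $\mathcal F_J$ is as in (\ref{PF}). Lemma \ref{1} says $\mathcal F_J$ preserves $\mathcal V(\mathbb R^{2n})$, and the obvious reflection $\overline x\mapsto-\overline x$ does too, so $f_\rho\in\mathcal V(\mathbb R^{2n})$ and linearity is automatic. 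To derive (\ref{inverseBaker}) one runs the factorization backwards: inverting the partial Fourier transform (valid in $L_2$, hence in $\mathcal V$) gives $(T\rho)(\overline t,\overline y)=(2\pi)^{-n}\int e^{-i\overline x\cdot\overline t}f_\rho(\overline x,\overline y)d^n\overline x$, and substituting $\overline t=(\overline q+\overline q')/2$, $\overline y=\overline q-\overline q'$ to undo $T$ reproduces (\ref{inverseBaker}) verbatim.

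I expect the only nontrivial step to be verifying that $T$ respects the Fourier-side condition in the definition of $\mathcal V$. This reduces to the elementary fact that for any invertible linear $L:\mathbb R^N\to\mathbb R^N$ one has $c|\overline\xi|^{n+1}\le|L\overline\xi|^{n+1}\le C|\overline\xi|^{n+1}$, so the $W_2^{n+1}$-norm is equivalent to its pullback under $L$, which is then combined with a change of variables in the Plancherel identity to conclude.
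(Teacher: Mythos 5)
Your proposal is correct and takes essentially the same route as the paper's own (much terser) proof: factor (\ref{Baker}) as the nonsingular linear substitution $\overline q=\overline t+\overline y/2$, $\overline{q'}=\overline t-\overline y/2$ followed by a partial Fourier transform, note that each factor preserves $\mathcal V(\mathbb R^{2n})$, and invert the partial Fourier transform to get (\ref{inverseBaker}). You additionally spell out the justification that the paper leaves implicit, namely why the linear change of variables respects the Fourier-side weight condition in the definition of $\mathcal V$.
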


\begin{proof}

The nonsingular change of variables $\overline q=\overline t+\overline y/2$,
$\overline{q'}=\overline t-\overline y/2$ in the function $\rho(\overline q,\overline{q'})$
as well as the Fourier transform with respect to one of coordinates map ${\mathcal V}({\mathbb R}^{2n})$
to itself. Applying the inverse Fourier transform to $f_{\hat \rho}(\overline x,\overline y)$ with respect to
$\overline x$ we obtain (\ref {inverseBaker}).
\end{proof}

\section {Characteristic function, Wigner function and optical quantum tomogram}

Consider an integral operator $\hat \rho$ with the kernel $\rho (\cdot ,\cdot )\in {\mathcal V}({\mathbb R}^{2n})$.Then,
Lemmas 1 and 3 result in the Wigner function (\ref {Wigner}) $\mathcal {W}_{\hat \rho}\equiv \mathcal {W}_{\rho}(\cdot ,\cdot )\in {\mathcal V}({\mathbb R}^{2n})$.

Obviously \cite {MankoPhysLettA96} the optical quantum tomogram is defined via the restriction of the partial
Fourier transform of the characteristic
function or the Radon transform \cite {Radon17} of the Wigner function. Nevertheless it  can not be correctly done
for an arbitrary $\rho (\cdot ,\cdot )\in L_2({\mathbb R}^{2n})$.
Nevertheless, the application of Lemmas 2 and 3 shows that it is correct for kernels $\rho (\cdot ,\cdot )\in {\mathcal V}({\mathbb R}^{2n})$.

\begin {theorem}\label{1} If a kernel $\rho (\cdot ,\cdot )\in {\mathcal V}({\mathbb R}^{2n})$, then
$f_{\rho}(\cdot )\in {\mathcal V}({\mathbb R}^{2n})$ and $\mathcal {W}_{\rho}(\cdot ,\cdot )\in  {\mathcal V}({\mathbb R}^{2n})$ such that
the following formulas are correct and define the same object called the optical quantum tomogram
\begin{equation}\label{tomogram}
\omega _{\rho}(\overline x,\overline \alpha )=\frac {1}{(2\pi
)^n}\int \limits _{{\mathbb R}^n} e^{-i\overline x\cdot\overline
t}f_{\rho }(\overline{t\cos \alpha }, \overline{t\sin \alpha
})d^n\overline t,
\end{equation}
\begin{equation}\label{tomogram2}
\omega _{\rho}(\overline x,\overline \alpha )=\frac{1}{(2\pi)^n}\int \limits _{{\mathbb R}^{2n}}\mathcal {W}_{\rho}(\overline q,\overline p)
\delta(\overline x-\overline{q\cos \alpha }-\overline{p\sin \alpha })d^n\overline qd^n\overline p,
\end{equation}
where  $\alpha _j\in [0,2\pi]$, $j\in \{1,\dots ,n\}$. Moreover,
$\omega _{\rho}(\cdot ,\overline \alpha )\in C({\mathbb R}^n)\cap L_1({\mathbb R}^n)$.
\end{theorem}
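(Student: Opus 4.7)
The plan is to chain Lemmas~1--3: first place both $f_\rho$ and $\mathcal{W}_\rho$ inside $\mathcal{V}(\mathbb{R}^{2n})$, then validate each of (\ref{tomogram}) and (\ref{tomogram2}) separately, and finally identify the two by applying the inverse Fourier transform in $\overline x$ to both sides.

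For the inclusions I would use Lemma~3 to write the map $\rho\mapsto f_\rho$ as the composition of the affine change of coordinates $(\overline q,\overline{q'})\mapsto(\overline t,\overline y)$ with a Fourier transform in $\overline t$; both factors preserve $\mathcal{V}(\mathbb{R}^{2n})$ by Lemma~1, so $f_\rho\in\mathcal{V}(\mathbb{R}^{2n})$. Another application of Lemma~1 to (\ref{Wigner}), which expresses $\mathcal{W}_\rho$ as a Fourier-type transform of $f_\rho$, then yields $\mathcal{W}_\rho\in\mathcal{V}(\mathbb{R}^{2n})$.

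Next I would validate (\ref{tomogram}) by applying Lemma~2 directly to $f_\rho$: the trace $F_{\overline\alpha}(\overline t)=f_\rho(\overline{t\cos\alpha},\overline{t\sin\alpha})$ lies in $C(\mathbb{R}^n)\cap L_1(\mathbb{R}^n)$, hence the right-hand side of (\ref{tomogram}) is literally $\mathcal{F}[F_{\overline\alpha}](\overline x)$ and defines a continuous function on $\mathbb{R}^n$ (in fact in $C_0$ by Riemann--Lebesgue). For (\ref{tomogram2}) I would first absorb the product of $\delta$-factors by the orthogonal substitution $\overline q=\overline{x\cos\alpha}-\overline{v\sin\alpha}$, $\overline p=\overline{x\sin\alpha}+\overline{v\cos\alpha}$, reducing it to the hyperplane integral $(2\pi)^{-n}\int_{\mathbb{R}^n}\mathcal{W}_\rho(\overline{x\cos\alpha}-\overline{v\sin\alpha},\overline{x\sin\alpha}+\overline{v\cos\alpha})\,d^n\overline v$. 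Combined with Fubini this simultaneously gives $\omega_\rho(\cdot,\overline\alpha)\in L_1(\mathbb{R}^n)$ and the bound $\|\omega_\rho(\cdot,\overline\alpha)\|_{L_1}\le(2\pi)^{-n}\|\mathcal{W}_\rho\|_{L_1(\mathbb{R}^{2n})}$, finishing the $L_1$ claim of the theorem. The required ingredient $\mathcal{W}_\rho\in L_1(\mathbb{R}^{2n})$ I would obtain by reproducing the Cauchy--Schwarz estimate~(\ref{sch}) on $\mathbb{R}^{2n}$: the definition of $\mathcal{V}$ forces $(1+|\overline q|^2+|\overline p|^2)^{(n+1)/2}\mathcal{W}_\rho\in L_2$, while the weight $(1+|\overline q|^2+|\overline p|^2)^{-(n+1)}$ is integrable over $\mathbb{R}^{2n}$.

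Finally, to identify the two expressions I would compute the inverse Fourier transform in $\overline x$ of each. For (\ref{tomogram}) this returns $F_{\overline\alpha}(\overline t)=f_\rho(\overline{t\cos\alpha},\overline{t\sin\alpha})$ by Fourier inversion. For the rotated form of (\ref{tomogram2}), interchanging the order of integration---legitimated by the $L_1$ bound on $\mathcal{W}_\rho$ just established---reduces it to the full Fourier integral of $\mathcal{W}_\rho$ evaluated at the point $(\overline{t\cos\alpha},\overline{t\sin\alpha})$, which by (\ref{Wigner}) coincides with $f_\rho$ at that point up to the matching prefactor. Since the two formulas then have the same Fourier transform in $\overline x$, they agree. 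The step I expect to be most delicate, and the main obstacle, is precisely this Fubini/Projection--Slice exchange together with the careful bookkeeping of constants and sign conventions; the $L_1$ bound on $\mathcal{W}_\rho$ coming from $\mathcal{V}$-membership is exactly what makes the swap rigorous, so the outstanding difficulty is bookkeeping rather than a new analytic idea.
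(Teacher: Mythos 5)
Your proposal is correct, but it organizes the argument differently from the paper in two places worth noting. For formula (\ref{tomogram}) the paper first passes to the rotated coordinates $\tilde x_j,\tilde y_j$, takes the partial Fourier transform in $\overline{\tilde x}$ (staying in ${\mathcal V}({\mathbb R}^{2n})$ by Lemma~1), and only then restricts to the hyperplane $\tilde y=0$; a single application of Lemma~2 to the already-transformed function then delivers $\omega_\rho(\cdot,\overline\alpha)\in C({\mathbb R}^n)\cap L_1({\mathbb R}^n)$ in one shot. You reverse the order: you restrict $f_\rho$ first and then Fourier-transform the $L_1$ trace, which gives continuity via Riemann--Lebesgue but not integrability, so you are forced to recover the $L_1$ claim by a separate route --- namely the global Cauchy--Schwarz estimate showing $\mathcal{W}_\rho\in L_1({\mathbb R}^{2n})$ (which is sound: ${\mathcal V}$-membership does give $(1+|\overline q|^2+|\overline p|^2)^{(n+1)/2}\mathcal{W}_\rho\in L_2$, and the weight $(1+|\overline q|^2+|\overline p|^2)^{-(n+1)}$ is integrable on ${\mathbb R}^{2n}$) followed by Fubini applied to the Radon-transform representation. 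Second, where the paper simply cites the Fourier slice theorem to identify (\ref{tomogram}) with (\ref{tomogram2}), you prove that identity by hand through the Fubini exchange, which is legitimate once $\mathcal{W}_\rho\in L_1$ is in place. Your route is more self-contained and yields the extra bound $\|\omega_\rho(\cdot,\overline\alpha)\|_{L_1}\le(2\pi)^{-n}\|\mathcal{W}_\rho\|_{L_1({\mathbb R}^{2n})}$, at the price of the additional $L_1({\mathbb R}^{2n})$ estimate on the Wigner function and of the constant/sign bookkeeping you flag; the paper's ordering (transform first, then take the Sobolev trace) is precisely what lets Lemma~2 produce both the continuity and the integrability of the tomogram simultaneously. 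One small caveat: your Fubini argument defines the Radon integral only for almost every $\overline x$, and you should say explicitly that the a.e.\ identification with the continuous function from (\ref{tomogram}) is what upgrades (\ref{tomogram2}) to a pointwise statement (the paper instead invokes Lemma~2 for the traces of $\mathcal{W}_\rho$ to get pointwise definedness directly).
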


\begin{proof}

It follows from Lemma 1 that the characteristic function $f_{\rho}\in {\mathcal V}({\mathbb R}^{2n})$.
Let us make a change of variables
$$
\tilde x_j=x_j\cos\alpha _j+y_j\sin\alpha _j,\ \tilde y_j=x_j\sin\alpha _j-y_j\cos\alpha _j,\ 1\le j\le n.
$$
Then, take the partial Fourier transform ${\mathcal F}_{\overline {\tilde x}}$ of $f_{\rho}(\overline {\tilde x},\overline {\tilde y})$ with respect
to variables $\overline {\tilde x}$. Due to Lemma 1 ${\mathcal F}_{\tilde x}(f)\in {\mathcal V}({\mathbb R}^{2n})$.
Applying Lemma 2 we obtain that the trace $F_{\overline \alpha }$ of the function ${\mathcal F}_{\overline {\tilde x}}(f)$
with respect to the hyperplane $\tilde y_j=-x_j\sin\alpha_j+y_j\cos\alpha_j=0,\ 1\le j\le n$, is correctly
defined and $F_{\overline \alpha}=\omega _{\rho}(\cdot ,\overline \alpha )\in C({\mathbb R}^n)\cap L_1({\mathbb R}^n)$.
Since the Wigner function is the Fourier transform of the characteristic function we can conclude that
$\mathcal {W}_{\rho}(\cdot ,\cdot )\in  {\mathcal V}({\mathbb R}^{2n})$ due to Lemma 1.
Taking into account Lemma 2 we obtain that the traces of $\mathcal {W}_{\rho}$ are determined for any hyperplane
$-x_j\sin\alpha _j+y_j\cos\alpha_j=0,\ 1\le j\le n$. Applying
the Radon transform to $\mathcal {W}_\rho(\overline q,\overline p)$ we obtain (\ref {tomogram2}). The coincidence
of (\ref {tomogram}) and (\ref {tomogram2}) follows from the Fourier slice theorem \cite {Helgason}.

\end{proof}

{\bf Remark.} {\it Transformation (\ref{tomogram}) is reversible with the inverse Fourier transform
\begin{equation}        \label{omegatof}
f_\rho(\overline{\lambda\cos\alpha},\overline{\lambda\sin\alpha})=
\int\limits_{\mathbb R^n}e^{i\overline x\cdot\overline\lambda}
\omega _{\rho}(\overline x,\overline\alpha)d^n\overline x,
\end{equation}
and the change of variables from $\{(\overline t,\overline\alpha)\}$ to the Cartesian coordinates
$$
\lambda_j=\mathrm{sgn}(y_j)\sqrt{x_j^2+y_j^2},~~~~~\alpha_j=\cot^{-1}\frac{x_j}{y_j},~~~~j\in\{1,...,n\}
$$
gives us the characteristic function $f_\rho(\overline x,\overline y)$,
which can be transformed to the density matrix $\rho(\overline q,\overline q')$
with the help of formula (\ref{inverseBaker}).} $\Box $

\section {Transition probability between two states}

The transition probability $P_{12}$ between two states $\hat\rho_1$ and $\hat\rho_2$ of quantum system reads
\begin{equation}        \label{transprob}
P_{12}=\int_{\mathbb{R}^{2n}}\rho_1(\overline q,\overline{q'})
\rho_2(\overline{q'},\overline{q})d^n\overline qd^n\overline{q'}.
\end{equation}
The following theorem is valid.

\begin {theorem}\label{2} Given two optical tomograms $\omega _{\rho_1}(\overline x,\overline\alpha)$ and
$\omega _{\rho_2}(\overline x,\overline\alpha)$ corresponding to two density matrixes
$\rho_1(\overline q,\overline{q'})\in{\mathcal V}({\mathbb R}^{2n})$
and $\rho_2(\overline q,\overline{q'})\in{\mathcal V}({\mathbb R}^{2n})$  by means of formulas
(\ref{Baker}) and (\ref{tomogram}) we get
\begin{align}\label{equalityt2}
&P_{12}=\int\limits_{[0;\,\pi]^n\times\mathbb{R}^{n}} d^n\overline\alpha d^n\overline\lambda\,
\bigg(\prod_{j=1}^n|\lambda_j|\bigg)
\int\limits_{\mathbb{R}^{2n}}d^n\overline x d^n\overline{x'}e^{i\overline\lambda\cdot(\overline x-\overline{x'})}
\omega _{\rho_1}(\overline x,\overline\alpha)\omega _{\rho_2}(\overline{x'},\overline\alpha).
\end{align}
\end{theorem}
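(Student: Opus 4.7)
The plan is to reduce $P_{12}$ to a Parseval-type integral involving only the characteristic functions $f_{\rho_1}$ and $f_{\rho_2}$, then execute the componentwise polar change of variables that produces the Jacobian $\prod_{j=1}^n|\lambda_j|$, and finally rewrite the resulting expression in terms of the tomograms via the inversion formula (\ref{omegatof}) from the Remark.

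First I would substitute the inverse Baker formula (\ref{inverseBaker}) into (\ref{transprob}) both for $\rho_1(\overline q,\overline{q'})$ and for $\rho_2(\overline{q'},\overline q)$, introducing auxiliary Fourier variables $\overline x$ and $\overline{x'}$. Performing the unit-Jacobian substitution $\overline t=(\overline q+\overline{q'})/2$, $\overline y=\overline q-\overline{q'}$ decouples the two kernels, and the integration against $\overline t$ produces (in the distributional sense) a factor proportional to $\delta(\overline x+\overline{x'})$. Eliminating $\overline{x'}$ yields an intermediate identity of the form
\begin{equation*}
P_{12}=\frac{1}{(2\pi)^n}\int_{\mathbb R^{2n}}f_{\rho_1}(\overline x,\overline y)f_{\rho_2}(-\overline x,-\overline y)d^n\overline xd^n\overline y,
\end{equation*}
which is the Hilbert--Schmidt pairing rephrased at the level of characteristic functions and is essentially a consequence of Lemma 3.

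Next I would apply the componentwise polar substitution $x_j=\lambda_j\cos\alpha_j$, $y_j=\lambda_j\sin\alpha_j$ with $\lambda_j\in\mathbb R$ and $\alpha_j\in[0,\pi]$, covering $\mathbb R^{2n}$ (up to a null set) exactly once with Jacobian $\prod_{j=1}^n|\lambda_j|$. By (\ref{omegatof}) the factor $f_{\rho_1}(\overline{\lambda\cos\alpha},\overline{\lambda\sin\alpha})$ equals $\int e^{i\overline\lambda\cdot\overline x}\omega_{\rho_1}(\overline x,\overline\alpha)d^n\overline x$, while $f_{\rho_2}(-\overline{\lambda\cos\alpha},-\overline{\lambda\sin\alpha})$, after replacing $\overline\lambda$ by $-\overline\lambda$ in the same formula, becomes $\int e^{-i\overline\lambda\cdot\overline{x'}}\omega_{\rho_2}(\overline{x'},\overline\alpha)d^n\overline{x'}$. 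Multiplying, collecting the Jacobian, and rearranging the integrations reproduces the right-hand side of (\ref{equalityt2}).

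The main obstacle is the justification of Fubini and of the oscillatory $\overline t$-integral that produces the delta, together with the absolute convergence of the $\overline\lambda$-integral after the $|\lambda_j|$ weights appear. This is precisely where the hypothesis $\rho_i\in\mathcal V(\mathbb R^{2n})$ enters: by Lemma 3 and Lemma 1 one has $f_{\rho_i}\in\mathcal V(\mathbb R^{2n})$, so both $|\overline x|^{n+1}f_{\rho_i}$ and $|\overline y|^{n+1}f_{\rho_i}$ lie in $L_2(\mathbb R^{2n})$, and by Theorem 1 the tomograms satisfy $\omega_{\rho_i}(\cdot,\overline\alpha)\in C(\mathbb R^n)\cap L_1(\mathbb R^n)$. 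These two facts together provide enough decay at infinity to make the inner Fourier integrals over $\overline x,\overline{x'}$ converge classically and to absorb the $\prod_j|\lambda_j|$ factor in the outer $\overline\lambda$-integral, so that all the rearrangements of integration order are legitimate and the identity (\ref{equalityt2}) follows.
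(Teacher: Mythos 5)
Your route is the same as the paper's: both arguments reduce $P_{12}$ to the overlap $\int f_{\rho_1}(\overline x,\overline y)f_{\rho_2}(-\overline x,-\overline y)\,d^n\overline x\, d^n\overline y$ of the characteristic functions (the paper via the substitution $\overline q=\overline t+\overline y/2$, $\overline{q'}=\overline t-\overline y/2$ followed by the Plancherel equality, you via the inverse Baker formula and a delta function -- the same computation written differently), then pass to componentwise polar coordinates with Jacobian $\prod_{j}|\lambda_j|$, and finally insert (\ref{omegatof}). The integrability justifications you invoke (Lemma 2 for the $\overline\lambda$-integral, Theorem 1 for the $\overline x,\overline{x'}$-integrals) are also the ones the paper uses.

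The one point that needs fixing is the constant. Your intermediate identity carries a prefactor $(2\pi)^{-n}$, whereas the paper's (\ref{intT2}) has none; both cannot be compatible with (\ref{equalityt2}) as stated. A direct check with the one-dimensional oscillator ground state, for which $f_\rho(x,y)=e^{-(x^2+y^2)/4}$ and $P_{11}=1$ while $\int_{\mathbb R^2}f_\rho(x,y)f_\rho(-x,-y)\,dx\,dy=2\pi$, shows that your normalization of the intermediate identity is the correct one. But then carrying your $(2\pi)^{-n}$ through the polar substitution and through (\ref{omegatof}) produces $(2\pi)^{-n}$ times the right-hand side of (\ref{equalityt2}), so your final sentence does not follow from your own intermediate formula: either you have silently dropped the factor, or the theorem as printed (and the paper's own proof of it) is off by that constant. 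State explicitly which normalization you are proving and keep it consistent from (\ref{transprob}) to the end.
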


\begin{proof}

Let us change the variables $\overline q=\overline t+\overline y/2$,
$\overline{q'}=\overline t-\overline y/2$ in the integral (\ref{transprob})
and let us do  the partial Fourier transforms of functions $\rho_1(\overline t+\overline y/2,\overline t-\overline y/2)$,
$\rho_2(\overline t-\overline y/2,\overline t+\overline y/2)$ over the variables $\overline t$
taking into account Lemmas 1 and 3. According to the Plancherel equality we get
\begin{equation}\label{intT2}
P_{12}=\int\limits_{\mathbb{R}^{2n}}f_{\rho_1}(\overline x,\overline y)f_{\rho_2}(-\overline x,-\overline y)
d^n\overline xd^n\overline y,
\end{equation}
where $f_{\rho_1}(\overline x,\overline y)$ and $f_{\rho_2}(-\overline x,-\overline y)$
are defined by (\ref{Baker}).

The change of variables $x_j=\lambda_j\cos\alpha_j$, $y_j=\lambda_j\sin\alpha_j$
in the integral (\ref{intT2}) gives rise to the following relation:
\begin{equation*}
P_{12}=\int\limits_{[0;\,\pi]^n\times\mathbb{R}^{n}}d^n\overline\alpha d^n\overline\lambda\,
\bigg(\prod_{j=1}^n|\lambda_j|\bigg)
f_{\rho_1}(\overline{\lambda\cos\alpha},\overline{\lambda\sin\alpha})
f_{\rho_2}(-\overline{\lambda\cos\alpha},-\overline{\lambda\sin\alpha}).
\end{equation*}
Due to Lemma 2 the functions $f_{\rho_1}(\overline{\lambda\cos\alpha},\overline{\lambda\sin\alpha})$,
$f_{\rho_2}(\overline{\lambda\cos\alpha},\overline{\lambda\sin\alpha})$ belong to  $L_1(\mathbb R^n)$
on $\overline\lambda$, and due to Theorem 1 the functions
$\omega _{\rho_1}(\overline x,\overline\alpha)$, $\omega _{\rho_2}(\overline{x},\overline\alpha)$
belong to  $L_1(\mathbb R^n)$ on $\overline x$.
Applying formula (\ref{omegatof}), which is the inverse Fourier transform of (\ref{tomogram}),
we obtain the result (\ref{equalityt2}).
\selectlanguage{english}

\end{proof}

\section {Fractional Fourier transforms of quantum states}

Using the Hamiltonians $\hat H_j=\frac {\hat p_j^2+\hat q_j^2}{2}$ let us define a unitary representation
of the $n$th powers of the circle group ${\mathbb T}=[0,2\pi ]^n$ by the formula
$$
{\mathbb T}\ni \overline \alpha \to \hat U_{\overline \alpha }=\exp\left(-i\overline\alpha\cdot\overline{\hat H}\,\right).
$$
It is straightforward to check that
\begin{equation}\label{rot}
\hat U_{\overline \alpha }^\dagger\hat q_j\hat U_{\overline \alpha }=\hat q_j\cos\alpha_j + \hat p_j\sin\alpha_j.
\end{equation}
Given $\alpha \in (0;\pi)$ let us consider the unitary operator $\tilde{{\mathcal F}}_{\alpha}$ known as
the fractional Fourier transform \cite {Namias} and determined by
the formula
\begin{equation}\label{fract}
\tilde{{\mathcal F}}_{\alpha }^q[\varphi(q)](x)=\frac {\exp\left({\frac {i x^2\cos\alpha}{2\sin\alpha }}\right)}{\sqrt {2\pi |\sin \alpha |}}
\int \limits _{\mathbb R}
\exp\left(\frac {iq^2\cos\alpha}{2\sin\alpha }-\frac {ixq}{\sin\alpha }\right)\varphi(q)dq.
\end{equation}
Using (\ref {fract}) we get \cite {Namias}
\begin{equation*}
\left |\hat U_{\overline \alpha }[\psi(\overline q)](\overline x)\right |^2=
\left |\tilde{{\mathcal F}}_{\alpha _1}^{q_1}\circ \dots \circ \tilde{{\mathcal F}}_{\alpha _n}^{q_n}[\psi(\overline q)]
(\overline x)\right |^2,
\end{equation*}
where $\tilde{\mathcal F}_{\alpha _j}^{q_j}$ acts on functions of the variable $q_j$.

\begin{theorem}\label{3} Suppose that $\rho (\overline x,\overline y)=\psi (\overline x)\psi ^*(\overline y)$, where
$\psi \in {\mathbb V}({\mathbb R}^n)$. Then,
\begin{equation*}
\omega _{\rho }(\overline x,\overline \alpha )=\left |\hat U_{\overline \alpha }[\psi](\overline x)\right |^2.
\end{equation*}
\end{theorem}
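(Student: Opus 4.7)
The plan is to compute both sides of the claimed identity by their explicit integral formulas and to match them, using Namias's identity (stated just before the theorem) to re-express $|\hat U_{\overline\alpha}\psi|^2$ as the modulus squared of an iterated fractional Fourier transform.

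First I would check that the pure-state kernel $\rho=\psi\otimes\psi^*$ belongs to $\mathcal{V}(\mathbb R^{2n})$, so that Theorem~1 is applicable. For any multi-index $\beta=(\beta_q,\beta_{q'})$ on $\mathbb R^{2n}$ with $|\beta|\le n+1$, both $|\beta_q|$ and $|\beta_{q'}|$ are at most $n+1$, so $\partial^{\beta_q}\psi$ and $\partial^{\beta_{q'}}\psi^*$ lie in $L_2(\mathbb R^n)$ and their tensor product lies in $L_2(\mathbb R^{2n})$; hence $\rho\in W_2^{n+1}(\mathbb R^{2n})$. A symmetric argument on the Fourier side, using $\mathcal F[\psi]\in W_2^{n+1}(\mathbb R^n)$, gives $\rho\in\mathcal V(\mathbb R^{2n})$, so that formula~(\ref{tomogram}) is available.

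Next I would substitute this $\rho$ into~(\ref{Baker}) and feed the resulting $f_\rho$ into~(\ref{tomogram}). Exchanging the order of integration is permitted by Fubini, since $\psi\in\mathbb V(\mathbb R^n)$ implies $\psi\in L_1(\mathbb R^n)$ (split $\int|\psi|$ on $\{|\overline x|\le 1\}$ and its complement and use Cauchy--Schwarz against $|\overline x|^{n+1}\psi\in L_2$). Then for each $j$ with $\sin\alpha_j\ne 0$, the substitution $r_j=t_j\sin\alpha_j$ turns the tomogram into a double integral in a center-of-mass variable $\overline s$ and a relative variable $\overline r$ whose integrand is $\psi(\overline s+\overline r/2)\,\psi^*(\overline s-\overline r/2)\,\exp(i\sum_j(s_jr_j\cos\alpha_j-x_jr_j)/\sin\alpha_j)$ with prefactor $(2\pi)^{-n}\prod_j|\sin\alpha_j|^{-1}$. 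On the other side, Namias's identity writes $|\hat U_{\overline\alpha}\psi(\overline x)|^2$ as the modulus squared of $\tilde{\mathcal F}_{\alpha_1}^{q_1}\circ\cdots\circ\tilde{\mathcal F}_{\alpha_n}^{q_n}[\psi](\overline x)$. Expanding via~(\ref{fract}) and its complex conjugate, the outer phases $e^{i\sum_j x_j^2\cot\alpha_j/2}$ cancel in modulus, and the symplectic substitution $q_j=s_j+r_j/2$, $q_j'=s_j-r_j/2$ (Jacobian~$1$) turns $q_j^2-q_j'^2$ into $2s_jr_j$ and $q_j-q_j'$ into $r_j$, reproducing exactly the integral obtained for $\omega_\rho$.

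This proves the identity whenever $\prod_j\sin\alpha_j\ne 0$; the remaining boundary values $\alpha_j\in\{0,\pi\}$ follow by continuity, since Theorem~1 gives $\omega_\rho(\cdot,\overline\alpha)\in C(\mathbb R^n)$ while strong continuity of $\overline\alpha\mapsto\hat U_{\overline\alpha}$ on $L_2(\mathbb R^n)$, combined with the pointwise Sobolev embedding applied to $\psi\in\mathbb V(\mathbb R^n)$, makes $|\hat U_{\overline\alpha}\psi(\overline x)|^2$ jointly continuous in $(\overline x,\overline\alpha)$, so the identity extends by density. The main technical obstacle is the bookkeeping in matching the two double integrals — aligning normalizations, quadratic phases and the relative-variable domain after the nonsingular change of variables — but once this is carried out the required Fubini interchanges are controlled by the same $\mathcal V(\mathbb R^{2n})$ regularity that underpins Theorems~1 and~2.
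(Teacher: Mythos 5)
Your proof is correct, but it takes a genuinely different route from the paper's. The paper computes nothing explicitly: it characterizes $\mathbb V(\mathbb R^n)$ by the conditions $\hat q_1^{s_1}\cdots\hat q_n^{s_n}\psi,\ \hat p_1^{s_1}\cdots\hat p_n^{s_n}\psi\in L_2(\mathbb R^n)$ for $\sum_j s_j\le n+1$, uses the covariance relation (\ref{rot}) to conclude $\hat U_{\overline\alpha}\,\mathbb V(\mathbb R^n)=\mathbb V(\mathbb R^n)$, and then appeals to the interpretation of $\omega_\rho(\cdot,\overline\alpha)$ as the probability distribution of the rotated quadrature $\overline{\hat x}(\overline\alpha)$ in the state $|\psi\rangle\langle\psi|$, whose density in the rotated representation is $|\hat U_{\overline\alpha}\psi|^2$. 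You instead verify the identity by direct computation: inserting $\rho=\psi\otimes\psi^*$ into (\ref{Baker}) and (\ref{tomogram}), substituting $r_j=t_j\sin\alpha_j$ and $q_j=s_j+r_j/2$, $q_j'=s_j-r_j/2$, and matching the resulting double integral against the squared modulus of the Namias kernel (\ref{fract}); the phases (via $q_j^2-q_j'^2=2s_jr_j$) and the normalization $(2\pi)^{-n}\prod_j|\sin\alpha_j|^{-1}$ do line up, your tensor-product argument that $\psi\otimes\psi^*\in\mathcal V(\mathbb R^{2n})$ supplies a proof of the equivalence the paper merely asserts, and the Fubini interchange is controlled by $\psi\in L_1(\mathbb R^n)$ as you say. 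What your route buys is an honest verification of precisely the step the paper leaves implicit, namely that formula (\ref{tomogram}) really computes the distribution of $\overline{\hat x}(\overline\alpha)$ in the state $|\psi\rangle\langle\psi|$; what the paper's route buys is brevity and freedom from the singularities of the Namias kernel at $\sin\alpha_j=0$. One small repair is needed at those singular values: your continuity argument cites Theorem 1, which gives continuity of $\omega_\rho(\cdot,\overline\alpha)$ in $\overline x$ for fixed $\overline\alpha$ but says nothing about continuity in $\overline\alpha$, so the limit $\alpha_j\to 0,\pi$ is not justified as written. It is simpler to treat these modes directly: when $\sin\alpha_j=0$ the $j$-th argument pair in (\ref{Baker}) degenerates to the diagonal, and the $t_j$-integration in (\ref{tomogram}) is just a Fourier transform followed by its inverse, yielding $|\psi|^2$ evaluated at $\pm x_j$ in that slot, which matches $\hat U_{\overline\alpha}$ acting as the identity (respectively the parity, up to phase) in the $j$-th mode.
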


\begin{proof}

The space ${\mathbb V}({\mathbb R}^n)$ can be alternatively described as $f\in {\mathbb V}({\mathbb R}^n)$ iff
$$
\hat q_1^{s_1}q_2^{s_2}\cdots \hat q_n^{s_n}f,\ \hat p_1^{s_1}\hat p_2^{s_2}\cdots \hat p_n^{s_n}f\in L_2({\mathbb R}^n),
$$
wherever integer powers $s_j$,
$\sum \limits _{j=1}^ns_j\le n+1$.
Hence, the property  (\ref {rot}) results in $\hat U_{\overline \alpha }{\mathbb V}({\mathbb R}^n)={\mathbb V}({\mathbb R}^n)$,
and we obtain
$
\psi _{\overline \alpha }\equiv \hat U_{\overline \alpha }\psi \in {\mathbb V}({\mathbb R}^n),
$
where $\psi _{\overline \alpha }$ is a wave function of the state $\hat \rho $ in the representation of
\mbox{$\overline{\hat x}(\overline\alpha)=\overline {\hat q\cos \alpha }+\overline {\hat p\sin\alpha }$}.
The condition $\psi \in {\mathbb V}({\mathbb R}^n)$ is equivalent to
$\rho \in \mathcal{V}({\mathbb R}^{2n})$.
Taking into account that $\omega _{\rho}(\overline x,\overline \alpha )$ is a probability distribution corresponding to the observable
$\overline{\hat x}(\overline\alpha)$ in the state $\hat \rho $, we get the result.
Note that $\omega _{\rho }(\cdot ,\overline \alpha )\in L_1({\mathbb R}^n)\cap C({\mathbb R}^n)$ according to Theorem 1.
\sloppy

\end{proof}

{\bf Remark.} {\it Analogously, given a density matrix of mixed state in the position representation
$\rho(\overline q,\overline q')\in\mathcal{V}(\mathbb{R}^{2n})$
the corresponding density matrix $\rho_{\overline\alpha}(\overline x,\overline x')$
in the representation of $\overline{\hat x}(\overline\alpha)$ equals

\begin{equation*}
\rho_{\overline\alpha}(\overline x,\overline x')=\left\langle\overline x\left|\hat U_{\overline\alpha}
\hat\rho\hat U_{\overline\alpha}^\dagger\right|\overline x'\right\rangle=
\tilde{{\mathcal F}}_{\overline\alpha}^{\overline q}(\overline x)\circ
\tilde{{\mathcal F}}_{-\overline\alpha}^{\overline q'}(\overline x')[\rho(\overline q,\overline q')].
\end{equation*}
So, $\rho_{\overline\alpha}(\overline x,\overline x')\in\mathcal{V}(\mathbb{R}^{2n})$ according to Lemma 1.

The tomogram $\omega _{\rho}(\overline x,\overline \alpha )$ is a probability distribution of
observable $\overline{\hat x}(\overline\alpha)$ in the state $\hat\rho$, then
$\omega _{\rho}(\overline x,\overline \alpha )$ is the diagonal matrix element of $\rho_{\overline\alpha}(\overline x,\overline x')$,
i.e. it is the restriction of $\rho_{\overline\alpha}(\overline x,\overline x')$ to the hyperplane $\overline x'=\overline x$
\begin{equation}            \label{omegafractrho}
\omega _{\rho}(\overline x,\overline \alpha )=\rho_{\overline\alpha}(\overline x,\overline x')
\big|_{\overline x'=\overline x}~~.
\end{equation}
Moreover,
$\omega _{\rho}(\cdot ,\overline \alpha )\in C({\mathbb R}^n)\cap L_1({\mathbb R}^n)$,
and formulas (\ref{omegafractrho}), (\ref{tomogram}), (\ref{tomogram2}) define the same object called the optical quantum tomogram.} $\Box $

\section{Conclusion}

Our goal was to search for maximally common conditions on kernels of integral operators under which
Wigner functions and optical quantum tomograms as well as linking their formulas are correctly defined.
We have obtained sufficient conditions of this kind using the Sobolev Embedding theorem.

We defined the space
$\mathcal{V}(\mathbb{R}^{2n})=W_2^{n+1}({\mathbb R}^{2n})\cap {\mathcal F}\left[W_2^{n+1}({\mathbb R}^{2n})\right]$
of kernels $\rho(\overline q,\overline q')$
for which the continuous and integrable tomograms $\omega _{\rho}(\overline x,\overline \alpha )$
exist as well as the inverse transformations of these tomograms into original kernels.
The space $\mathcal{V}(\mathbb{R}^{2n})$
is a subspace of the Sobolev space $W_2^{n+1}(\mathbb{R}^{2n})$ invariant with respect to  to the Fourier transform,
where $W_2^{n+1}(\mathbb{R}^{2n})$ consists of the functions belonging to the space $L_2(\mathbb{R}^{2n})$,
having generalized derivatives of the order of $n+1$  such that
$|\overline x|^{n+1}{\mathcal F}[\rho(\overline q,\overline q')](\overline x)\in L_2(\mathbb{R}^{2n})$.
The space $\mathcal{V}(\mathbb{R}^{2n})$ is narrower than $L_2(\mathbb{R}^{2n})$ or
$W_2^{n+1}(\mathbb{R}^{2n})$, but it is much wider than the Schwarz space $\mathcal{S}(\mathbb{R}^{2n})$.
The main advantage of our approach is a justification of correctness for the integral formula linking optical quantum tomograms
and Wigner functions by means of the Radon transform.


\end{document}